%%%%%%%%%%%%%%%%%%%%%%%%%%%%%%%%%%%%%%%%%%%%%%%%%%%%%%%%%%%%%%%%%%%%%%%%%%%%%%%
% Here you can choose if you wish to use the ACC layout (for the submission) or
% the Arxiv layout. This is done by defining \useieeelayout to be 0 or 1. In
% addition, you can also include the full proofs and developments by setting
% \showall to 0 or 1, which then color-codes the text in accordance with what is
% included in the arxiv and ACC versions in blue and green, respectively, if they
% differ at any point.
%
% Recommended settings for an initial read would be:
%
% \def\useieeelayout{0}
% \def\showall{0}
%
% and for the final submission:
%
% \def\useieeelayout{1}
% \def\showall{0}
%%%%%%%%%%%%%%%%%%%%%%%%%%%%%%%%%%%%%%%%%%%%%%%%%%%%%%%%%%%%%%%%%%%%%%%%%%%%%%%
\def\useieeelayout{0}
\def\showall{0}

\newcommand{\inConf}[1]{\if\useieeelayout1{#1}\fi\if\showall1{\color{green!50!black}In ACC: #1}\fi}
\newcommand{\inArxiv}[1]{\if\useieeelayout0{#1}\else\if\showall1{\color{blue}In ArXiV: #1}\fi\fi}

\if\useieeelayout1
\documentclass[letterpaper, 10 pt, conference]{ieeeconf}  
\IEEEoverridecommandlockouts                              %
\overrideIEEEmargins

%%%%%%%%%%%%%%%%%%%%%%%%%%%%%%%%%%%%%%%%%%%%%%%%%%%%%%%%%%%%%%%%%%%%%%%%%%%%%%%%
%2345678901234567890123456789012345678901234567890123456789012345678901234567890
%        1         2         3         4         5         6         7         8

%\documentclass[journal]{IEEEtran}  % Comment this line out if you need a4paper

% \documentclass[a4paper, 10pt, conference]{ieeeconf}      % Use this line for a4 paper

% \IEEEoverridecommandlockouts                              % This command is only needed if 
                                                          % you want to use the \thanks command

%\overrideIEEEmargins                                      % Needed to meet printer requirements.

%In case you encounter the following error:
%Error 1010 The PDF file may be corrupt (unable to open PDF file) OR
%Error 1000 An error occurred while parsing a contents stream. Unable to analyze the PDF file.
%This is a known problem with pdfLaTeX conversion filter. The file cannot be opened with acrobat reader
%Please use one of the alternatives below to circumvent this error by uncommenting one or the other
%\pdfobjcompresslevel=0
%\pdfminorversion=4

% See the \addtolength command later in the file to balance the column lengths
% on the last page of the document

% The following packages can be found on http:\\www.ctan.org
\usepackage{graphicx}
\usepackage{amsmath}
\usepackage{optidef}
 
\usepackage{amsthm}
\usepackage{amssymb}
\usepackage{amsfonts}       
\usepackage{multicol}
\usepackage{cuted}
\usepackage{multirow}
\usepackage{caption, subcaption}
\newtheorem{theorem}{Theorem}
\newtheorem{lemma}{Lemma}
\newtheorem{proposition}{Proposition}
\newtheorem{definition}{Definition}
\newtheorem{problem}{Problem}
\newtheorem{conjecture}[theorem]{Conjecture}
\newtheorem{corollary}[theorem]{Corollary}
\usepackage{xcolor}
\usepackage{cite}
\usepackage[colorlinks=true,allcolors=steelblue]{hyperref}
\usepackage{comment}
\usepackage{algorithm}
\usepackage{algorithmic}
\usepackage{scalerel}
\usepackage{bm}
\definecolor{steelblue}{RGB}{70,130,180}

% Command definitions
\def\bbn{\mathbb N}
\def\bbz{\mathbb Z}
\def\bbr{\mathbb R}

\def\calm{\mathcal M}

\def\calr{\mathcal R}

\def\calo{\mathcal O}

%======== Abbreviations ==========================

		%{\mathbb{B}\left(#1;#2\right)}

% \DeclareMathOperator{\esup}{esssup}

\DeclareMathOperator*{\argmin}{argmin}
%\DeclareMathOperator*{\rrtstar}{\texttt{RRT}^{\star}}
%\DeclareMathOperator*{\nrbrrtstar}{\texttt{NRB-RRT}^{\star}}
% \newcommand{\conv}{\mathsf{conv}}

% \DeclareMathOperator{\Tr}{Tr}

% math operators

\newcommand{\norm}[1]{\left\lVert {#1} \right\rVert}

% Colors

\title{\LARGE \bf
An Online Learning Analysis of Minimax Adaptive Control
}

\author{Venkatraman Renganathan, Andrea Iannelli, and Anders Rantzer% <-this % stops a space
\thanks{This project has received funding from the European Research Council (ERC) under the European Union’s Horizon 2020 research and innovation program under grant agreement No 834142 (Scalable Control). V. Renganathan and A. Rantzer are with the Department of Automatic Control, Lund University, Sweden and are also members of the ELLIIT Strategic Research Area in Lund University. A. Iannelli is with the Institute of Systems Theory and Automatic Control, University of Stuttgart, Germany. Emails: (venkatraman.renganathan,anders.rantzer)@control.lth.se, andrea.iannelli@ist.uni-stuttgart.de.}%
}

%%%%%%%%%%%%%%%%%%%%%%%%%%%%%%%%%%%%%%%%%%%%%%%%%%%%%%%%%%%%%%%%%%%%%%%%%%%%%%%%

\else
\documentclass[10pt]{article}
\usepackage{graphicx}
\usepackage{amsmath}
\usepackage{optidef}
 
\usepackage{amsthm}
\usepackage{amssymb}
\usepackage{amsfonts}       
\usepackage{multicol}
\usepackage{cuted}
\usepackage{multirow}
\usepackage{caption, subcaption}
\newtheorem{theorem}{Theorem}
\newtheorem{lemma}{Lemma}

\newtheorem{definition}{Definition}

\usepackage{xcolor}
\usepackage{cite}
\usepackage[colorlinks=true,allcolors=steelblue]{hyperref}
\usepackage{comment}
\usepackage{algorithm}
\usepackage{algorithmic}
\usepackage{scalerel}
\usepackage{bm}
\usepackage{url}
\definecolor{steelblue}{RGB}{70,130,180}

% For LaTeX2e
\usepackage[preprint]{tmlr}
% If accepted, instead use the following line for the camera-ready submission:
%\usepackage[accepted]{tmlr}
% To de-anonymize and remove mentions to TMLR (for example for posting to preprint servers), instead use the following:
%\usepackage[preprint]{tmlr}

\title{An Online Learning Analysis of Minimax Adaptive Control}

% Authors must not appear in the submitted version. They should be hidden
% as long as the tmlr package is used without the [accepted] or [preprint] options.
% Non-anonymous submissions will be rejected without review.

\author{\name Venkatraman Renganathan  \email       venkatraman.renganathan@control.lth.se \\
      \addr Department of Automatic Control - LTH \\ Lund University, Sweden
      \AND
      \name Andrea Iannelli \email andrea.iannelli@ist.uni-stuttgart.de \\
      \addr Institute of Systems Theory and Automatic Control \\ University of Stuttgart, Germany.
      \AND
      \name Anders Rantzer \email anders.rantzer@control.lth.se \\
      \addr Department of Automatic Control - LTH \\ Lund University, Sweden
      }

% The \author macro works with any number of authors. Use \AND 
% to separate the names and addresses of multiple authors.

% NO NEED TO FILL THESE THINGS
  % Insert correct month for camera-ready version
 % Insert correct year for camera-ready version
 % Insert correct link to OpenReview for camera-ready 
\fi

%%%%%%%%%%%%%%%%%%%%%%%%%%%%%%%%%%%%%%%%%%%%%%%%%%%%%%%%%%%%%%%%%%%%%%%%%%%%%%%%
%%%%%%%%%%%%%%%%%%%%%%%%%%%%%%%%%%%%%%%%%%%%%%%%%%%%%%%%%%%%%%%%%%%%%%%%%%%%%%%%

\begin{document}

\maketitle
\thispagestyle{empty}
\pagestyle{empty}

%%%%%%%%%%%%%%%%%%%%%%%%%%%%%%%%%%%%%%%%%%%%%%%%%%%%%%%%%%%%%%%%%%%%%%%%%%%%%%%%
\begin{abstract}

We present an online learning analysis of minimax adaptive control for the case where the uncertainty includes a finite set of linear dynamical systems. 
Precisely, for each system inside the uncertainty set, we define the model-based regret by comparing the state and input trajectories from the minimax adaptive controller against that of an optimal controller in hindsight that knows the true dynamics. We then define the total regret as the worst case model-based regret with respect to all models in the considered uncertainty set. We study how the total regret accumulates over time and its effect on the adaptation mechanism employed by the controller. Moreover, we investigate the effect of the disturbance on the growth of the regret over time and draw connections between robustness of the controller and the associated regret rate.

\end{abstract}

%%%%%%%%%%%%%%%%%%%%%%%%%%%%%%%%%%%%%%%%%%%%%%%%%%%%%%%%%%%%%%%%%%%%%%%%%%%%%%%%
\section{Introduction}\label{sec_intro}
% Intro & literature review on adaptive control
The interplay between machine learning, system identification and adaptive control has unveiled a fertile area of research which has the potential to answer some of the standing research questions in the field of learning-based control. 
Recent advances in online learning techniques have provided new perspectives on the design of algorithms where unknown systems can be controlled by acquiring knowledge through repeated interactions with the unknown environment \cite{hazan2022introduction}. This has close connections with adaptive control \cite{astromwittenmark} and in general with 
learning-based control techniques \cite{benosman2018model}. Minimax adaptive control is taken in this work as a prototypical example of the latter line of works to draw connections with regret, i.e. the performance metrics used in online learning. Design of minimax control for uncertain systems was investigated as early as in \cite{salmon1968minimax, didinsky1994minimax}. Subsequently, the design of minimax adaptive control was investigated for scalar systems with unknown input matrix sign in \cite{rantzer2020minimax}, for finite sets of linear systems in \cite{rantzer2021minimax, cederberg2022synthesis} and for the output feedback case in \cite{kjellqvist2021minimax}, respectively. There have been earlier works on robust adaptive control in \cite{chichka1995adaptive, yoneyama1997robust} where uncertainties in system dynamics were considered. Minimax adaptive control problems are generally challenging as obtaining exact $\ell_{2}$-gain bounds as explained in \cite{french2005p, vinnicombe2004examples, rantzer2021minimax} can be hard for multiple input multiple outputs systems with finite set of linear models and optimality can only be achieved if the exploration and exploitation trade-off is exactly captured. \\
% Intro & literature review on regret analysis in adaptive control 
The recent interest developed towards analyzing control algorithms for systems with unknown dynamics through the lens of regret analysis has the promise to enable a better understanding of this trade-off. There are quantities that are of interest but are unknown in advance to the online controller. We refer to such unknown entity as \emph{Quantity of Interest (QI)}. 
Lack of knowledge about a QI determines an accumulated cost, with respect to a control designed with perfect knowledge, which denotes the notion of regret. For instance, the growth of expected regret in linear quadratic control was investigated in \cite{jedra2022minimal} when matrices $(A, B)$ were unknown. Regret bounds have been investigated in \cite{boffi2021regret} for adaptive control problems in stochastic setting. This paper proposes an online learning analysis of minimax adaptive control of linear-time invariant systems featuring adversarial disturbance and a priori knowledge of a finite set of systems. One of the distinctive novelty is a new definition of regret, suitable for this setting in which the QIs are both the system dynamics and the exogenous disturbance. 
From an online learning perspective, efficient adaptive control algorithms are characterized by limiting the growth of regret over time. To quantify the regret, we usually require an optimal control policy (policy regret) or a sequence of best control actions (dynamic regret) available in hindsight as in \cite{goel2020regret}. Here, we propose studying the policy regret associated with the minimax adaptive controller by comparing it against the standard $\mathcal{H}_{\infty}$ control which knows the dynamics. \\
Recently, \cite{Karapetyan_CDC22} investigated the regret of robustness of an $\mathcal{H}_{\infty}$ controller (whose QI is just the adversarial disturbance) when compared to an oracle controller which has knowledge of the future disturbance trajectory. Also related is the work in \cite{hazan2020nonstochastic}, which investigated the regret analysis for the generic non-stochastic control problem and their system identification approach employed random inputs before controlling it using disturbance-based policy. Similarly, \cite{agarwal2019online} studied online control with adversarial disturbances and proposed a disturbance action control policy based efficient algorithm to obtain nearly tight regret bounds. On the contrary, our work looks at nonlinear adaptive state feedback policy which \emph{concurrently} controls the system under adversarial disturbance and implicitly learns the system dynamics. % among a finite set of linear systems. 
This gives rise to an interesting trade-off in the adversary strategy, whereby the worst-case disturbance is the one that delays
the learning process of the controller while minimizing the energy spent (which is penalized in the total cost). \\

% Contributions
\noindent \textbf{Contributions:} We provide a detailed analysis for the minimax adaptive control algorithm proposed in \cite{rantzer2021minimax}
with the aim to improve our understanding on the role of the adaptation mechanism and the adversary disturbance on the regret.
Since an explicit expression for the optimal minimax adaptive controller is not known, we apply our analysis to the candidate sub-optimal minimax adaptive control algorithm\footnote{The distinction between the optimal and the sub-optimal minimax adaptive control policies will be made clear at appropriate places.} developed in \cite{rantzer2021minimax, cederberg2022synthesis}. Specifically, the main contributions are: 
\begin{enumerate}
    \item Definition of the: \textsl{model-based regret} corresponding to a specific model in the uncertainty set characterizing the accumulated cost with respect to an optimal controller in hindsight which knows the true dynamics; \textsl{total regret} as the worst-case model-based regret corresponding to any model in the uncertainty set.
    \item Construction of an adversarial disturbance policy which provably prevents the minimax adaptive controller from learning the true dynamics (Theorem \ref{thm_confuse_disturb}). 
    \item Despite the possible difficulty in the identification of the true dynamics, we show that the minimax adaptive controller enjoys a sub-linear regret rate with respect to the best $\mathcal{H}_{\infty}$ controller in hindsight 
    (Theorem \ref{thm_regret_analysis}).
\end{enumerate}
% Outline of paper
The rest of the paper is organised as follows. The problem formulation is discussed in \S\ref{sec_minmax_adap_ctrl}. The online learning analysis is performed in \S\ref{sec_regret}, and some of its features are further elucidated through numerical simulation in \S\ref{sec_num_sim}. Finally, the main findings of the paper are summarized in \S\ref{sec_conclusion}.   

%%%%%%%%%%%%%%%%%%%%%%%%%%%%%%%%%%%%%%%%%%%%%%%%%%%%%%%%%%%%%%%%%
%%%%%%%%%%%%%%%%%%%%%%%%%%%%%%%%%%%%%%%%%%%%%%%%%%%%%%%%%%%%%%%%%
\noindent \textbf{Notation and Preliminaries.} 
The cardinality of the set $A$ is denoted by $\left | A \right \vert$. The set of real numbers, integers and the natural numbers are denoted by $\bbr, \bbz$, and $\bbn$ respectively. For a matrix $A \in \bbr^{n \times n}$, we denote its transpose and its trace by $A^{\top}$ and $\mathbf{Tr}(A)$ respectively. We denote by $\mathbb{S}^{n}$, the set of symmetric matrices in $\bbr^{n \times n}$. For $A \in \mathbb{S}^{n}$, we write $A \succ 0$ and $A \succeq 0$ to say that $A$ is positive definite and positive semi-definite, respectively. An identity matrix of dimension $n$ is denoted by $I_{n}$. Given $x \in \bbr^{n}, A \in \bbr^{n \times n}, B \in \bbr^{n \times n}$, the notations ${\left \| x \right \Vert}^{2}_{A}$ and ${\left \| B \right \Vert}^{2}_{A}$ mean $x^{\top} A x$ and $\mathbf{Tr}\left(B^{\top} A B \right)$ respectively. A signal $\{ x_k \}$ is said to be in $\ell_{2}$ space if it has finite energy meaning that $\sum^{\infty}_{k = 0} x^{2}_k < \infty$. For any time $T \in \mathbb{N}$, if the truncation of a signal $\{ x_k \}$ to the interval $[0, T]$ lies in the $\ell_{2}$ space, then the signal is said to be lying in the extended $\ell_{2}$ space denoted by $\ell_{2e}$.

%%%%%%%%%%%%%%%%%%%%%%%%%%%%%%%%%%%%%%%%%%%%%%%%%%%%%%%%%%%%%%%%%
%%%%%%%%%%%%%%%%%%%%%%%%%%%%%%%%%%%%%%%%%%%%%%%%%%%%%%%%%%%%%%%%%

\section{Problem Formulation Using Minimax Adaptive Control} \label{sec_minmax_adap_ctrl}
In this section we introduce the minimax adaptive control subject of our investigations through online learning.  

\subsection{Minimax adaptive control with finite set of linear systems} 
\noindent Consider the following discrete-time linear system
\begin{align}\label{eqn_system_dynamics}
    x_{k+1} = A x_{k} + B u_{k} + w_{k}, \quad k \in \mathbb{N},
\end{align}
where $x_{k} \in \bbr^{n}$ and $u_{k} \in \bbr^{m}$ denote the system states and control inputs, respectively, and the additive disturbance $w_{k} \in \bbr^{n}$ is assumed to be adversarial. 
The true system matrices $A \in \bbr^{n \times n}$ and $B \in \bbr^{n \times m}$ are unknown but assumed to belong to a set $\calm$ with $\left | \mathcal{M} \right \vert = \mathcal{F} \in \mathbb{N}$ defined such that % as the union of
$M_{i} := (A_{i}, B_{i}) \in \calm, i = 1, \dots, \mathcal{F}$, where all pairs are assumed throughout to be stabilizable. For instance, control of a discrete-time linearized inverted pendulum dynamics falls under the above setting when the pendulum length is uncertain. Generally, minimax adaptive control approach can be a suitable design solution when multiple systems who do not share common Lyapunov function need to be controlled by a single controller. Let us denote by $\Pi$ the set of admissible control policies
such that % given by 
\begin{align} \label{eqn_control_policy}
    u_{k} = \pi_{k}\left( x_0, x_1, \dots, x_k, u_0, \dots, u_{k-1} \right),\quad \pi_{k} \in \Pi.
\end{align}
An optimal adaptive control policy should interact with the system in order to extract information about the unknown system matrices $A, B$ while also guaranteeing good performance and robustness to the adversarial disturbance. This can be achieved by optimizing the following minimax cost
\begin{align} \label{eqn_cost_function}
 \inf_{\pi \in \Pi} \underbrace{\sup_{w, A, B} \sum^{\infty}_{k=0} \left( c(x^{\pi}_{k}, u^{\pi}_{k}, Q, R) - \gamma^{2} {\left \| w_{k} \right \Vert}^{2} \right)}_{J_{\pi}(x_{0}, \gamma)}.
\end{align}
where $c(x^{\pi}, u^{\pi}, Q, R) := \norm{x^{\pi}}^{2}_{Q} + \norm{u^{\pi}}^{2}_{R}$ for given penalty matrices $Q \succ 0, R \succ 0$; $x^{\pi}$ denotes the evolution of the state of (\ref{eqn_system_dynamics}) starting from $x_0$ under the control input $u^{\pi}$ from the policy $\pi$; and $\gamma > 0$ quantifies the desired level of robustness to the external disturbance (higher $\gamma$ resulting in weaker robustness requirements). The optimal minimax control policy $\pi^{\dagger}$ and the associated cost are given by
\begin{align} \label{eqn_optimal_minimax_policy}
\pi^{\dagger} &:= \argmin_{\pi \in \Pi} J_{\pi}(x_{0}, \gamma), \quad J^{\dagger}(x_{0}, \gamma):= J_{\pi^{\dagger}}(x_{0}, \gamma)
\end{align}
and the resulting disturbance attenuation level achieved by the control policy $\pi^{\dagger}$ from disturbance to the regulated output $\zeta := \begin{bmatrix} x^{\top} & u^{\top} \end{bmatrix}^{\top}$ is denoted by $\gamma^{\dagger}$ and is defined as
\begin{align}
    \gamma^{\dagger} := \sqrt{\sup_{w^{\dagger} \neq 0} \frac{\sum^{\infty}_{k = 0} c(x^{\pi^{\dagger}}_{k}, u^{\pi^{\dagger}}_{k}, Q, R)}{\sum^{\infty}_{k = 0} \norm{w^{\dagger}_{k}}^{2}}}. 
\end{align}
This formulation provides a family of minimax control policy parameterized by $\gamma$, which are guaranteed to exist $\forall \gamma > \gamma^{\dagger}$. We cast the problem as a zero-sum dynamic game with the control policy $\pi$ being the minimizing player and the adversaries $(w, A, B)$ being the maximizing players \cite{rantzer2021minimax}. 
The solution boils down to solving a minimax dynamic programming problem, which is intractable in most cases. 
An approximate (i.e. sub-optimal) solution has been recently proposed in \cite{rantzer2021minimax, cederberg2022synthesis}, and this will be the subject of this study. The following lemma summarizes the main result of \cite{rantzer2021minimax}, i.e. an explicit expression for an adaptive controller satisfying a pre-specified $\ell_{2}$-gain bound from disturbance to error. 

%\subsection{Restating Theorem 3 in \cite{rantzer2021minimax} Without Proof}
\begin{lemma} \label{lemma_Anders_L4DC}  
Given a compact set of linear models $\mathcal{M}$, and positive definite penalty matrices $Q \in \mathbb{R}^{n \times n}, R \in \mathbb{R}^{m \times m}$, suppose that there exists $K_1, \dots, K_{\mathcal{F}} \in \mathbb{R}^{m \times n}$ and matrices $P_{ij} \in \mathbb{R}^{n \times n}$ with $0 \prec P_{ij} = P_{ji} \prec \gamma^{2} I$ such that 
\begin{align} \label{eqn_minimax_riccati}
    \norm{x}^{2}_{P_{il}} 
    \geq 
    \norm{x}^{2}_{Q} &+ \norm{K_l x}^{2}_{R} 
    - \gamma^{2} \norm{(\bar{A}_{il} - \bar{A}_{jl}) x/2}^{2} \nonumber \\
    &+ \norm{(\bar{A}_{il} + \bar{A}_{jl}) x/2}^{2}_{(P^{-1}_{ij} - \gamma^{-2} I)^{-1}}, 
\end{align}
where $\bar{A}_{il} = A_i - B_i K_l$ denotes the closed loop system matrix for $x \in \mathbb{R}^{n}$ with $i,j,l \in \{1,\dots,\mathcal{F} \}$. Then, the bound $J_{\bar{\pi}}(x_{0}, \gamma) \leq \max_{i,j} \norm{x_{0}}^{2}_{P_{ij}}$ is valid for the minimax adaptive control policy $\bar{\pi}$ defined by
\begin{subequations}\label{eqn_minimax_model_select_T}   
\begin{align}
u_{k} &= -K_{l_{k}} x_{k}, \quad \text{where}, \label{eqn_minimax_control_law} \\
l_{k} &:= \argmin_{i \in \{1,\dots,\mathcal{F} \}} \underbrace{\sum^{k-1}_{\tau=0} \norm{x_{\tau+1} - A_{i} x_{\tau} - B_{i} u_{\tau}}^{2}}_{:= \alpha_{i}}. \label{eqn_minimax_model_select}   
\end{align}
\end{subequations}
\end{lemma}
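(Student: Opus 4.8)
The plan is to read the claimed bound as an $\ell_2$-gain (dissipativity) certificate and to prove it by exhibiting a storage function whose one-step decrease telescopes to the stated bound. Since the true pair $(A,B)$ is one of the $\mathcal F$ models, I would first fix an arbitrary true index $i^\star$, so that $w_k = x_{k+1}-A_{i^\star}x_k-B_{i^\star}u_k$ and, writing $\alpha_i^{(k)}$ for the accumulated residual $\alpha_i$ in \eqref{eqn_minimax_model_select} formed from the first $k$ samples, $\alpha_{i^\star}^{(k)} = \sum_{\tau=0}^{k-1}\norm{w_\tau}^2$ equals the disturbance energy appearing in the cost; for any other index $j$ the residual $\alpha_j^{(k)}$ exceeds this by exactly the model-mismatch contribution. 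Writing $c_k := c(x_k,u_k,Q,R)$, it then suffices to bound $\sum_{k=0}^{T-1}(c_k - \gamma^2\norm{w_k}^2)$ uniformly in $T$, in the disturbance, and in $i^\star$, and to let $T\to\infty$.

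The key object is the storage function
\[
V_k := \max_{j}\Big[\,\norm{x_k}^2_{P_{i^\star j}} - \tfrac{\gamma^2}{2}\big(\alpha_{i^\star}^{(k)}+\alpha_j^{(k)}\big)\Big],
\]
in which the two indices of $P$ are read as the pair (true model, competing model) that the adversary tries to keep consistent with the data. The next step is to recast \eqref{eqn_minimax_riccati} as a Bellman inequality. Completing the square in $w$ and using $0\prec P_{ij}\prec\gamma^2 I$ — which makes the quadratic in $w$ strictly concave and hence the supremum finite — I would show that the last two terms of \eqref{eqn_minimax_riccati} are precisely $\sup_w\big[\norm{\bar{A}_{il}x+w}^2_{P_{ij}} - \tfrac{\gamma^2}{2}\norm{w}^2 - \tfrac{\gamma^2}{2}\norm{(\bar{A}_{il}-\bar{A}_{jl})x+w}^2\big]$. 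The change of variables $w\mapsto w+(\bar{A}_{il}-\bar{A}_{jl})x/2$ is what produces the symmetric average $(\bar{A}_{il}+\bar{A}_{jl})/2$ and difference $(\bar{A}_{il}-\bar{A}_{jl})/2$ split and pins down the weight $(P_{ij}^{-1}-\gamma^{-2}I)^{-1}$ (well defined and positive since $P_{ij}\prec\gamma^2 I$). Here $w$ is the residual of the true model $i$ and $(\bar{A}_{il}-\bar{A}_{jl})x+w$ is the residual of the competing model $j$ under gain $K_l$, so \eqref{eqn_minimax_riccati} becomes $\norm{x}^2_{P_{il}} \ge c(x,-K_lx,Q,R) + \sup_w[\,\norm{x^+}^2_{P_{ij}} - \tfrac{\gamma^2}{2}(\norm{r_i}^2+\norm{r_j}^2)]$ with $x^+=\bar{A}_{il}x+w$.

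With this reformulation I would establish $V_{k+1}\le V_k - c_k$. Evaluating the Bellman inequality at $i=i^\star$, $l=l_k$ and the realized $w_k$, for an arbitrary competing index $j$, and absorbing the residual increments $\norm{w_k}^2=\alpha_{i^\star}^{(k+1)}-\alpha_{i^\star}^{(k)}$ and $\norm{r_{j,k}}^2=\alpha_j^{(k+1)}-\alpha_j^{(k)}$ into the $\alpha$-terms, one obtains for every $j$ that $\norm{x_{k+1}}^2_{P_{i^\star j}} - \tfrac{\gamma^2}{2}(\alpha_{i^\star}^{(k+1)}+\alpha_j^{(k+1)}) \le \norm{x_k}^2_{P_{i^\star l_k}} - \tfrac{\gamma^2}{2}(\alpha_{i^\star}^{(k)}+\alpha_j^{(k)}) - c_k$. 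Maximizing over $j$ turns the left side into $V_{k+1}$, while on the right the only $j$-dependence is through $-\tfrac{\gamma^2}{2}\alpha_j^{(k)}$, so the maximizer is the smallest $\alpha_j^{(k)}$. This is exactly where \eqref{eqn_minimax_model_select} enters: $l_k=\argmin_j\alpha_j^{(k)}$ gives $\min_j\alpha_j^{(k)}=\alpha_{l_k}^{(k)}$, and the right side becomes the $j=l_k$ entry of $V_k$ minus $c_k$, hence is $\le V_k - c_k$. Telescoping from $0$ to $T-1$ with $V_0=\max_j\norm{x_0}^2_{P_{i^\star j}}\le\max_{i,j}\norm{x_0}^2_{P_{ij}}$ and the lower bound $V_T\ge\norm{x_T}^2_{P_{i^\star i^\star}}-\gamma^2\alpha_{i^\star}^{(T)}\ge-\gamma^2\alpha_{i^\star}^{(T)}$ (take $j=i^\star$ and use $P_{i^\star i^\star}\succ0$), and recognizing $\gamma^2\alpha_{i^\star}^{(T)}=\gamma^2\sum_{k<T}\norm{w_k}^2$, yields $\sum_{k<T}(c_k-\gamma^2\norm{w_k}^2)\le\max_{i,j}\norm{x_0}^2_{P_{ij}}$. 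Letting $T\to\infty$ and taking the supremum over $(w,A,B)$ gives $J_{\bar{\pi}}(x_0,\gamma)\le\max_{i,j}\norm{x_0}^2_{P_{ij}}$.

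I expect the main obstacle to lie entirely in the coupling between the adversary and the data-driven model selection, and to manifest in two guesses. First, identifying the correct storage function — in particular the factor $\tfrac12$ weighting the accumulated residuals and the reading of the second index of $P_{ij}$ as the competing model — is what makes the completion-of-squares reproduce \eqref{eqn_minimax_riccati} verbatim rather than up to a spurious factor of two. Second, closing the telescoping hinges on the maximization over the competing index $j$ collapsing onto the index actually played by the controller; this is precisely the role of $l_k=\argmin_j\alpha_j^{(k)}$, and it is the mechanism by which any attempt of the disturbance to mislead the estimator is charged back through the $\gamma^2\norm{w_k}^2$ budget. Once these two choices are fixed, the $w$-maximization and the telescoping arithmetic are routine.
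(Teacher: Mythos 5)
Your proposal is correct, and it is essentially the proof of this result: the paper itself states the lemma without proof (it is imported verbatim from \cite{rantzer2021minimax}), and your argument — the completion-of-squares reading of \eqref{eqn_minimax_riccati} with the change of variables $w \mapsto w + (\bar{A}_{il}-\bar{A}_{jl})x/2$, the storage function $\max_j\bigl[\norm{x_k}^2_{P_{i^\star j}} - \tfrac{\gamma^2}{2}(\alpha_{i^\star}^{(k)}+\alpha_j^{(k)})\bigr]$, and the telescoping step that collapses onto the $j=l_k$ term via $l_k = \argmin_j \alpha_j^{(k)}$ — is exactly the dissipation argument used in that reference. The only (inessential) variation is that you fix the first index of $P$ to the true model $i^\star$ throughout and maximize only over the competing index, whereas the original maximizes over both indices and lets the true model enter only through the final lower bound $V_T \geq \norm{x_T}^2_{P_{i^\star i^\star}} - \gamma^2\alpha_{i^\star}^{(T)}$; both versions close identically.
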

\noindent The controller defined in \eqref{eqn_minimax_control_law}, and denoted by $\bar{\pi}$ in the reminder, is sub-optimal compared to $\pi^{\dagger}$ (\ref{eqn_optimal_minimax_policy}), i.e. the associated $\ell_{2}$ gain is $\bar{\gamma} > \gamma^{\dagger}$. Further, the cost $J_{\bar{\pi}}(x_{0}, \gamma)$ is finite as long as $\gamma > \bar{\gamma}$. The control input \eqref{eqn_minimax_control_law} is nonlinear as it depends on all the past history, an approach based on least squares estimation from \cite{didinsky1994minimax}.

\subsection{Known Dynamics Case: Standard $\mathcal{H}_{\infty}$ Control}
When the system matrices $A, B$ are known, problem (\ref{eqn_cost_function}) reduces to the standard $\mathcal{H}_{\infty}$ control. That is, a control input $u = Kx, K \in \mathbb{R}^{m \times n}$ is sought such that it minimizes the 
$\mathcal{H}_{\infty}$ norm of the closed loop system from $d$ to $\zeta$ \begin{align} \label{eqn_tf_d_to_z}
    T_{d \rightarrow \zeta}[K](z) := \begin{bmatrix}
    I \\ K \end{bmatrix} (zI - A - BK)^{-1}.
\end{align}
where $T_{d \rightarrow \zeta}$ is related to the cost function in (\ref{eqn_cost_function}) by appropriate choice of matrices $Q, R$. Using this observation, we define for every system model $M_{i} := (A_{i}, B_{i}) \in \calm$, 
the associated $\mathcal{H}_{\infty}$ control policy $\pi^{\star}_{i} \in \Pi$, 
which can be found by solving the coupled Riccati equations below \cite{bacsar2008h}
\begin{align}
\mathbf{M}_{i} &= Q + A^{\top}_{i} \mathbf{M}_{i} \Lambda^{-1}_{i} A_{i}, \quad \mathbf{M}_{i} \prec (\gamma^{\star}_{i})^{2} I,  \\
\Lambda_{i} &= I + \left(B_{i} R^{-1} B^{\top}_{i} - \left(\gamma^{\star}_{i}\right)^{-2} I \right) \mathbf{M}_{i}.    
\end{align}
The dynamic game has an unique saddle point solution 
\begin{align}
u^{\pi^{\star}_{i}}_{k} &= \pi^{\star}_{i}(x_{k}) = - K^{\star}_{i} x_{k}, \quad \text{and} \label{eqn_hinfty_ctrl}\\
w^{\psi^{\star}_{i}}_{k} &= \psi^{\star}_{i}(x_{k}) = L^{\star}_{i} x_{k} \label{eqn_hinfty_dist},
\end{align}
where $K^{\star}_{i} = R^{-1} B^{\top}_{i} \mathbf{M}_{i} \Lambda^{-1}_{i} A_{i}$ and $L^{\star}_{i} = (\gamma^{\star}_{i})^{-2} \mathbf{M}_{i} \Lambda^{-1}_{i} A_{i}$. Here, $\psi^{\star}_{i}$ denotes the worst case adversarial disturbance policy and it is, like $\pi^{\star}_{i}$, a linear function of $x_{k}$. 
The quantity 
\begin{align}
    \gamma^{\star}_{i} := \sqrt{\sup_{w^{\psi^{\star}_{i}} \neq 0} \frac{\sum^{\infty}_{k=0} c\left(x^{\pi^{\star}_{i}}_{k}, u^{\pi^{\star}_{i}}_{k}, Q, R\right)}{ \sum^{\infty}_{k=0} \norm{w^{\psi^{\star}_{i}}_{k}}^{2}_{2}}}
\end{align}
denotes the corresponding worst-case $\ell_{2}$ gain from the disturbance to the regulated output for the model $M_{i}, i \in \mathcal{M}$. 

\section{Regret of Minimax Adaptive Control} \label{sec_regret}
Regret analysis compares the performance of an online algorithm that takes decisions in the presence of uncertainty with respect to a clairvoyant policy with hindsight knowledge. For this reason, it is used here in order to better understand the performance achieved when controlling the system \eqref{eqn_system_dynamics} using minimax adaptive control algorithm. 
\begin{definition}
Regret of an online control algorithm $\mathcal{A}$ operating in the presence of uncertainty is defined as the additional cost incurred by the algorithm $\mathcal{A}$ in comparison to an optimal controller in hindsight that operates by knowing the uncertainty.
\end{definition}

We choose here the $\mathcal{H}_{\infty}$ controller associated with the true system as the optimal policy in hindsight. 
Note that $\forall i \in 1, \dots, \mathcal{F}$, $J_{\pi^{\star}_{i}}(x_{0}, \gamma) < J^{\dagger}(x_{0}, \gamma)$ as the minimax adaptive control policy $\pi^{\dagger}$ can never do better than the $\mathcal{H}_{\infty}$ policy $\pi^{\star}_{i}$ of the corresponding true system.
It is possible to use a different policy other than the $\mathcal{H}_{\infty}$ policy for the comparison. One could compare against a control policy that solves the linear quadratic problem with known disturbance but the true $(A,B) \in \mathcal{M}$ being unknown. However, to the best of our knowledge, there is no \emph{causal} solution for the optimal control policy to that problem. In principle, the optimal policy in hindsight should know apriori about any of the QIs that minimax does not know and also have a closed form causal solution. 

The study of the minimax adaptive control problem through online learning is divided in three steps: investigation of adversarial disturbance strategies that can lead to performance deterioration of the policy $\bar{\pi}$; definition of suitable notions of regret for this problem; investigation of the regret properties of the policy $\bar{\pi}$. We do not advocate the regret as a metric to measure the robustness of a control policy. Rather, we suggest to use the regret as a tool to identify areas of improvement of an online control policy by comparing it against multiple optimal policies in hindsight. Regret analysis could also give insights for the online control design to foresee and counteract against several possible strategies of adversaries trying to worsen its performance. One such possible strategy of an adversary with respect to the  policy $\bar{\pi}$ is illustrated below.

\subsection{Adversarial disturbance strategies for minimax control} \label{subsec_adverse_policy_construction}
The key adaptive mechanism of policy $\bar \pi$ in \eqref{eqn_minimax_model_select_T} can be interpreted as an implicit identification of the underlying plant. 
It is then natural to ask whether this is provably able to eventually converge to the correct estimate for the system. The following theorem gives a negative answer by constructing an adversarial disturbance strategy preventing the controller from optimally controlling the true system. 
\begin{theorem} \label{thm_confuse_disturb}
Given a compact set of models $\mathcal{M}$ with $\left | \mathcal{M} \right \vert = \mathcal{F}$ including the true model of the system \eqref{eqn_system_dynamics}, consider the policy $\bar{\pi}$ given by \eqref{eqn_minimax_model_select_T}. Let $j \in \{1,\dots,\mathcal{F}\}$ denote the index of the true model unknown to the policy $\bar{\pi}$. Then, $\forall k \in \mathbb{N}$, $\exists \theta_{f, k} \in \mathbb{R}$, $f = 1, \dots, \mathcal{F}$ such that the disturbance given by
\begin{align} \label{eqn_minimax_confusing_disturbance}
    w_{k} = \sum^{\mathcal{F}}_{f = 1}  \theta_{f, k} (A_{f} x_{k} + B_{f} u_{k}),
\end{align}
lets $\bar{\pi}$ to determine a minimizer $l_{k} \neq j$ in \eqref{eqn_minimax_model_select}.
\end{theorem}
\begin{proof}
Recall from \eqref{eqn_minimax_model_select} that when $i = j$, we simply get $\alpha_{j} = \sum^{k-1}_{\tau=0} \norm{w_{\tau}}^{2}$. For other cases when $i \neq j$, we expand $\alpha_{i}$ using the $w_{k}$ given by \eqref{eqn_minimax_confusing_disturbance} to get
\begin{align} \label{eqn_alpha_i_neq_jk}
    \alpha_{i} 
    &= \sum^{k-1}_{\tau=0} \norm{ v^{(i)}_{\tau} + \sum^{\mathcal{F}}_{f = 1, f \neq j} \theta_{f, k} (A_{f} x_{k} + B_{f} u_{k}) }^{2}, 
\end{align}
with $v^{(i)}_{\tau} = (\theta_{j, k} A_{j} - A_{i}) x_{\tau} + (\theta_{j, k} B_{j} - B_{i}) u_{\tau}$. Then, the disturbance can let the controller choose $l_{k} = i$ as per \eqref{eqn_minimax_model_select} deviating from the true value of $j$ through the appropriate selection of the constants $\{\theta_{f,k}\}^{\mathcal{F}}_{f=1}$  such that $\alpha_{i} < \alpha_{j}$. One simple choice would be to choose $\theta_{j,k} = -1, \theta_{i,k} = 1$ and $\{\theta_{f, k}\}^{\mathcal{F}}_{f=1, f \neq i, f \neq j} = 0$ at time $k$ such that $\alpha_{i} = 0$ in \eqref{eqn_alpha_i_neq_jk}. Such a disturbance strategy would let the controller choose $l_{k} = i$ rather than $j$. Note that the adversary has the freedom to make $\alpha_{i} = 0$ for its own choice of $i \in \{1,\dots,\mathcal{F}\}, i \neq j$ at any time step $k$ using $\{\theta_{f,k}\}^{\mathcal{F}}_{f=1}$. 
\end{proof}

\noindent \textbf{Remarks:} Disturbances with smaller magnitudes maximise the cost given in \eqref{eqn_cost_function}. Though, the disturbance given by \eqref{eqn_minimax_confusing_disturbance} can make the learning hard for the controller, it need not have a smaller magnitude for a given $\gamma > 0$ and $\{\theta_{f,k}\}^{\mathcal{F}}_{f=1}$, and hence it may \emph{not} lead to the worse cost. Further, for certain range of $\{\theta_{f,k}\}^{\mathcal{F}}_{f=1}$, the associated closed loop system may turn out to be unstable. The negative result formulated in Theorem \ref{thm_confuse_disturb} justifies further analysis on the sub-optimality faced by the minimax adaptive controller, which is studied in the next sections through the concept of regret.

\subsection{Regret Definitions}
Note that each model $(A_{i}, B_{i}) \in \mathcal{M}$ suffers different regret when compared against the optimal $\mathcal{H}_{\infty}$ controller in hindsight. Hence, we quantify the regret of each model in the set $\mathcal{M}$ in the following definition.
\begin{definition} Given a model $M_{i} := (A_{i}, B_{i}) \in \calm, i \in \{1, \dots, \mathcal{F}\}$, we define the model-based regret of the minimax adaptive control policy $\pi^{\dagger} \in \Pi$ with respect to the optimal control policy $\pi^{\star}_{i}$ for $\gamma \geq \gamma^{\dagger} > \gamma^{\star}_{i}$ and time $T \in \mathbb{N}$ as 
\begin{align} \label{eqn_model_regret_def_cost}
    \calr(\pi^{\dagger}, \pi^{\star}_{i}, T) 
    = \sup_{w \in \ell_{2e}} \sum^{T}_{k=0} d_{k} \left(\pi^{\dagger}, \pi^{\star}_{i} \right), \quad \text{where}, 
    \end{align}
\begin{equation*}
    d_{k} \left(\pi^{\dagger}, \pi^{\star}_{i} \right) 
    := \norm{x^{\pi^{\dagger}}_{k} - x^{\pi^{\star}_{i}}_{k}}^{2}_{Q} + \norm{u^{\pi^{\dagger}}_{k} - u^{\pi^{\star}_{i}}_{k}}^{2}_{R}. 
\end{equation*}
\end{definition}
Any disturbance that is not in the $\ell_{2e}$ space will result in diverging states. We note that the choice of regret metric is not conventional, as the standard approach would be to define it as difference of costs, that is,
\begin{align} \label{eqn_model_regret_def_cost_difference}
    \bar{\calr}(\pi^{\dagger}, \pi^{\star}_{i}, T) 
    &= \sup_{w \in \ell_{2e}} \sum^{T}_{k=0} \bar{d}_{k} \left(\pi^{\dagger}, \pi^{\star}_{i} \right),  
\end{align}
\begin{equation*}
    \bar{d}_{k} \left(\pi^{\dagger}, \pi^{\star}_{i} \right) := c\left(x^{\pi^{\dagger}}_{k}, u^{\pi^{\dagger}}_{k}, Q, R \right) - c\left(x^{\pi^{\star}_{i}}_{k}, u^{\pi^{\star}_{i}}_{k}, Q, R \right)
\end{equation*}
While \eqref{eqn_model_regret_def_cost_difference} captures how close the systems controlled by the minimax adaptive controller and the optimal $\mathcal{H}_{\infty}$ controller in hindsight are in terms of the performance, 
it does not provide information on how close the two state and inputs trajectories are. Further, \eqref{eqn_model_regret_def_cost_difference} cannot account for the direction of the control input being applied to the system. For these reasons, we propose to use \eqref{eqn_model_regret_def_cost} as the definition of model-based regret in this work. Note that the model-based regret in \eqref{eqn_model_regret_def_cost} is a function of the chosen level of robustness $\gamma$ because this parameter affects the two policies $\pi^{\dagger}$ and $\pi^{\star}_{i}$ (this dependence is omitted for the sake of clarity). The regret is defined for $\gamma \geq \gamma^{\dagger} > \gamma^{\star}_{i}$ to ensure that a fair comparison is made between the resulting trajectories from controllers that share the same level of disturbance attenuation capabilities. To compute \eqref{eqn_model_regret_def_cost}, we need to characterize the trajectories of the system $x^{\pi^{\dagger}}_{k}$ and $x^{\pi^{\star}_{i}}_{k}$ given by \eqref{eqn_system_dynamics} under the same sequence of adversarial disturbance inputs affecting the system using the control policies $\pi^{\dagger}$ and $\pi^{\star}_{i}$ respectively. This naturally leads us to investigate what would be the worst-case model-based regret corresponding to any arbitrary model $M_{i} \in \mathcal{M}$, i.e., the total regret.  

\begin{definition} 
The total regret of the minimax adaptive controller is defined as
\begin{align} \label{eqn_total_regret_def_cost}
    \mathfrak{R}(\pi^{\dagger}, T) 
    := \max_{i \in \{1, \dots, \mathcal{F}\}} \calr(\pi^{\dagger}, \pi^{\star}_{i}, T).
\end{align}
\end{definition}
While comparing policies, it is important to compare their disturbance attenuation levels too. Sub-optimality gap indicates a room for improvement in terms of the robustness. Since, minimax adaptive controller can never match the $\mathcal{H}_{\infty}$ controller, the difference in their disturbance attenuation level is referred as the \emph{model-based sub-optimality gap}. 
\begin{definition}
Given a model $(A_{i}, B_{i}) \in \calm, i \in \{1, \dots, \mathcal{F}\}$, the model-based sub-optimality gap of the minimax adaptive control policy $\pi^{\dagger}$ is defined as 
\begin{align} \label{eqn_model_based_subopt_gap}
    {\calo}(\pi^{\dagger}, \pi^{\star}_{i}) &:= \gamma^{\dagger} - \gamma^{\star}_{i}. 
\end{align}
\end{definition}
The model-based sub-optimality gap satisfies by definition 
${\calo}(\pi^{\dagger}, \pi^{\star}_{i}) \geq 0$ and characterizes how the lack of knowledge about the QIs results in a worst disturbance attentuation level of the minimax adaptive controller (or reduction in robust performance). In a similar spirit to the definition of total regret, we define below 
the minimal and the maximal sub-optimality gaps, which are by definition both non-negative.
\begin{definition} The minimal sub-optimality gap and the maximal sub-optimality gap of the minimax adaptive control policy $\pi^{\dagger}$ are respectively defined as 
\begin{align}
    \underline{\calo}(\pi^{\dagger}) &:= \gamma^{\dagger} - \max_{i \in \{1, \dots, \mathcal{F}\}} \gamma^{\star}_{i}, \quad \text{and} \\
    \overline{\calo}(\pi^{\dagger}) &:= \gamma^{\dagger} - \min_{i \in \{1, \dots, \mathcal{F}\}} \gamma^{\star}_{i}. 
\end{align}
\end{definition}
%%%%%%%%%%%%%%%%%%%%%%%%%%%%%%%%%%%%%%%%%%%%%%%%%%%%%%%%%%%%%%%%%
%%%%%%%%%%%%%%%%%%%%%%%%%%%%%%%%%%%%%%%%%%%%%%%%%%%%%%%%%%%%%%%%%

\subsection{Study of Minimax Adaptive Control Regret} \label{sec_regret_analysis}
The following theorem establishes the asymptotic behaviour of the total regret associated with the minimax adaptive control policy $\mathfrak{R}(\bar{\pi}^{\dagger}, T)$. 
\begin{theorem} \label{thm_regret_analysis}
Consider the uncertain linear dynamical system given by \eqref{eqn_system_dynamics} with the uncertainty described by $\calm$. If the disturbance signal is in $\ell_{2}$ space, then the associated total regret (\ref{eqn_total_regret_def_cost}) is sub-linear, i.e.
\begin{align} \label{eqn_regret_growth_cdtn}
    \lim_{T\rightarrow \infty} \frac{\mathfrak{R}(\bar{\pi}^{\dagger}, T)}{T} = 0.
\end{align}
\end{theorem}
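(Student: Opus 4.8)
The plan is to show that, whenever the disturbance has finite energy, the cumulative per-step discrepancy $\sum_{k=0}^{T} d_k(\bar\pi,\pi^\star_i)$ is bounded by a constant independent of the horizon $T$; sub-linearity of the total regret \eqref{eqn_total_regret_def_cost} then follows at once upon dividing by $T$. The conceptual point I want to stress is that this bound does \emph{not} rely on the adaptive mechanism eventually identifying the true plant --- indeed Theorem~\ref{thm_minimax_confusing_disturbance} shows identification can be obstructed --- but only on the worst-case performance certificate of Lemma~\ref{lemma_Anders_L4DC}, which holds irrespective of whether the correct model is ever selected.

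First I would exploit the cost bound of Lemma~\ref{lemma_Anders_L4DC}. For $\gamma>\bar\gamma$ and any disturbance $w\in\ell_2$, the definition of $J_{\bar\pi}$ in \eqref{eqn_cost_function} together with $J_{\bar\pi}(x_0,\gamma)\le\max_{i,j}\norm{x_0}^2_{P_{ij}}$ yields
\[
\sum_{k=0}^{\infty} c(x^{\bar\pi}_k,u^{\bar\pi}_k,Q,R)\;\le\;\max_{i,j}\norm{x_0}^2_{P_{ij}}+\gamma^2\sum_{k=0}^{\infty}\norm{w_k}^2\;<\;\infty,
\]
so the minimax state and input trajectory is square-summable and, in particular, converges to the origin.

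Next I would treat the $H_\infty$ baseline: for each model $M_i$ the saddle-point solution \eqref{eqn_hinfty_ctrl}--\eqref{eqn_hinfty_dist} certifies the finite gain $\gamma^\star_i$ through the value matrix $\mathbf{M}_i$, giving for the same disturbance $\sum_{k=0}^{\infty} c(x^{\pi^\star_i}_k,u^{\pi^\star_i}_k,Q,R)\le\norm{x_0}^2_{\mathbf{M}_i}+(\gamma^\star_i)^2\sum_{k=0}^{\infty}\norm{w_k}^2<\infty$. Combining the two estimates through the elementary bound $\norm{a-b}^2_S\le 2\norm{a}^2_S+2\norm{b}^2_S$ applied to the two quadratic terms of $d_k$ in \eqref{eqn_model_regret_def_cost_time_k} gives $d_k\le 2c(x^{\bar\pi}_k,u^{\bar\pi}_k,Q,R)+2c(x^{\pi^\star_i}_k,u^{\pi^\star_i}_k,Q,R)$. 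Summing over $k$ and inserting the two bounds shows $\sum_{k=0}^{T} d_k\le C_i$ for a constant $C_i$ depending only on $x_0$, $\gamma$, $\gamma^\star_i$ and $\sum_k\norm{w_k}^2$, but \emph{not} on $T$. Taking the maximum over the finitely many indices $i\in\{1,\dots,\mathcal{F}\}$, the total regret obeys $\mathfrak{R}(\bar\pi^\dagger,T)\le\max_i C_i<\infty$ uniformly in $T$, whence $\mathfrak{R}(\bar\pi^\dagger,T)/T\to 0$ as claimed.

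The hard part will be reconciling this argument with the supremum over $w\in\ell_{2e}$ in the regret definition \eqref{eqn_model_regret_def_cost}: read literally over disturbances of unbounded energy, the quadratic dependence of $d_k$ on $w$ would render the regret infinite, so the statement must be understood with the energy $\sum_k\norm{w_k}^2$ held finite (i.e.\ $w\in\ell_2$), in which case $C_i$ grows with this energy yet stays independent of $T$. A secondary technicality is the $H_\infty$ estimate itself: converting a finite $\ell_2$ gain into \emph{square-summable} rather than merely bounded trajectories requires $\pi^\star_i$ to be internally stabilizing, which is exactly what the dynamic-game Riccati solution of \cite{bacsar2008h} underlying \eqref{eqn_hinfty_ctrl}--\eqref{eqn_hinfty_dist} provides.
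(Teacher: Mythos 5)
Your proof is correct, but it follows a genuinely different route from the paper's. The paper argues \emph{asymptotically}: it asserts that both closed loops (minimax adaptive and $H_\infty$) are stabilizing for any $\ell_2$ disturbance, so that the states and hence the feedback inputs of both systems decay to zero, which gives $\lim_{k\to\infty} d_k(\bar{\pi}^{\dagger},\pi^{\star}_i)=0$; sub-linearity then follows from the Ces\`aro-mean property that a vanishing summand has vanishing time-average. You instead argue via \emph{energy certificates}: the bound $J_{\bar\pi}(x_0,\gamma)\le\max_{i,j}\norm{x_0}^2_{P_{ij}}$ of Lemma~\ref{lemma_Anders_L4DC} and the dissipation inequality of the Riccati solution for $\pi^{\star}_i$ give horizon-independent bounds on $\sum_k c(\cdot)$ for both trajectories, and the inequality $\norm{a-b}^2_S\le 2\norm{a}^2_S+2\norm{b}^2_S$ converts these into a bound $\sum_{k=0}^T d_k \le C_i$ uniform in $T$. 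Your conclusion is therefore \emph{stronger} than the paper's: you obtain bounded (not merely sub-linear) regret for each fixed $\ell_2$ disturbance, with explicit constants in terms of $x_0$, $\gamma$, $\gamma^{\star}_i$ and the disturbance energy, and you avoid relying on the exponential-decay claim that the paper states without proof. (Your boundedness conclusion is also consistent with the paper's numerical section, which reports bounded regret for the $\ell_2$ disturbance case.) The tension you flag regarding the $\sup_{w\in\ell_{2e}}$ in \eqref{eqn_model_regret_def_cost} is real and is equally present in the paper's own proof, which likewise fixes a single $\ell_2$ disturbance and never addresses the supremum; both arguments must be read pathwise, with the disturbance energy held finite, exactly as you state.
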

\begin{proof}
Recall that both minimax adaptive control policy $\bar{\pi}$ given by \eqref{eqn_minimax_control_law} and $\mathcal{H}_{\infty}$ control policy given by \eqref{eqn_hinfty_ctrl} are stabilising (with exponential decay of states and controls) for any adversarial disturbance in $\ell_{2}$ space. That is, given any disturbance signal $w_{k}$ in $\ell_{2}$ space for plant model $i \in \{1,\dots,\mathcal{F}\}$, we have 
\begin{align} \label{eqn_state_exp_decay}
\lim_{k \rightarrow \infty} \norm{x^{\bar{\pi}^{\dagger}}_{k}}^{2}_{Q} = 0, \quad \text{and} \quad 
\lim_{k \rightarrow \infty} \norm{x^{\pi^{\star}_{i}}_{k}}^{2}_{Q} = 0.
\end{align}
Then, this means that $\lim_{k \rightarrow \infty} \norm{x^{\bar{\pi}^{\dagger}}_{k} - x^{\pi^{\star}_{i}}_{k}}^{2}_{Q} = 0$. Since at any time $k \in \mathbb{N}$ both minimax adaptive control input $u^{\bar{\pi}^{\dagger}}_{k}$ given by \eqref{eqn_minimax_control_law} and the $\mathcal{H}_{\infty}$ control input $u^{\pi^{\star}_{i}}_{k}$ given by \eqref{eqn_hinfty_ctrl} are functions of the states $x^{\bar{\pi}^{\dagger}}_{k}$ and $x^{\pi^{\star}_{i}}_{k}$ respectively that decay to zero asymptotically, we infer that
\begin{align} \label{eqn_input_exp_decay}
\lim_{k \rightarrow \infty} \norm{u^{\bar{\pi}^{\dagger}}_{k}}^{2}_{R} = 0, \quad \text{and} \quad 
\lim_{k \rightarrow \infty} \norm{u^{\pi^{\star}_{i}}_{k}}^{2}_{R} = 0.
\end{align}
Then, this means that $\lim_{k \rightarrow \infty} \norm{u^{\bar{\pi}^{\dagger}}_{k} - u^{\pi^{\star}_{i}}_{k}}^{2}_{R} = 0$. Therefore, the difference term decays as well to zero meaning that $\lim_{k \rightarrow \infty} d_{k}(\bar{\pi}^{\dagger}, \pi^{\star}_{i}) = 0$. Hence, the result follows.
\end{proof}
An insight gathered from the proof is that stability of the policy implies certain regret properties. This has connections with recent findings in \cite{Karapetyan_IFAC23} which studied the relationship between stability and regret for disturbances in $\ell_{\infty}$ space.
%%%%%%%%%%%%%%%%%%%%%%%%%%%%%%%%%%%%%%%%%%%%%%%%%%%%%%%%%%%%%%%%%
%%%%%%%%%%%%%%%%%%%%%%%%%%%%%%%%%%%%%%%%%%%%%%%%%%%%%%%%%%%%%%%%%

\section{Numerical Simulation}
\label{sec_num_sim}
In this section, we exemplify our analysis using a linear dynamical system with a model uncertainty consisting of four different linear models. 
\begin{figure*}
  \centering
  % Confusing Disturbance
  \begin{subfigure}{.33\linewidth}
    \centering
    \includegraphics[width=\textwidth]{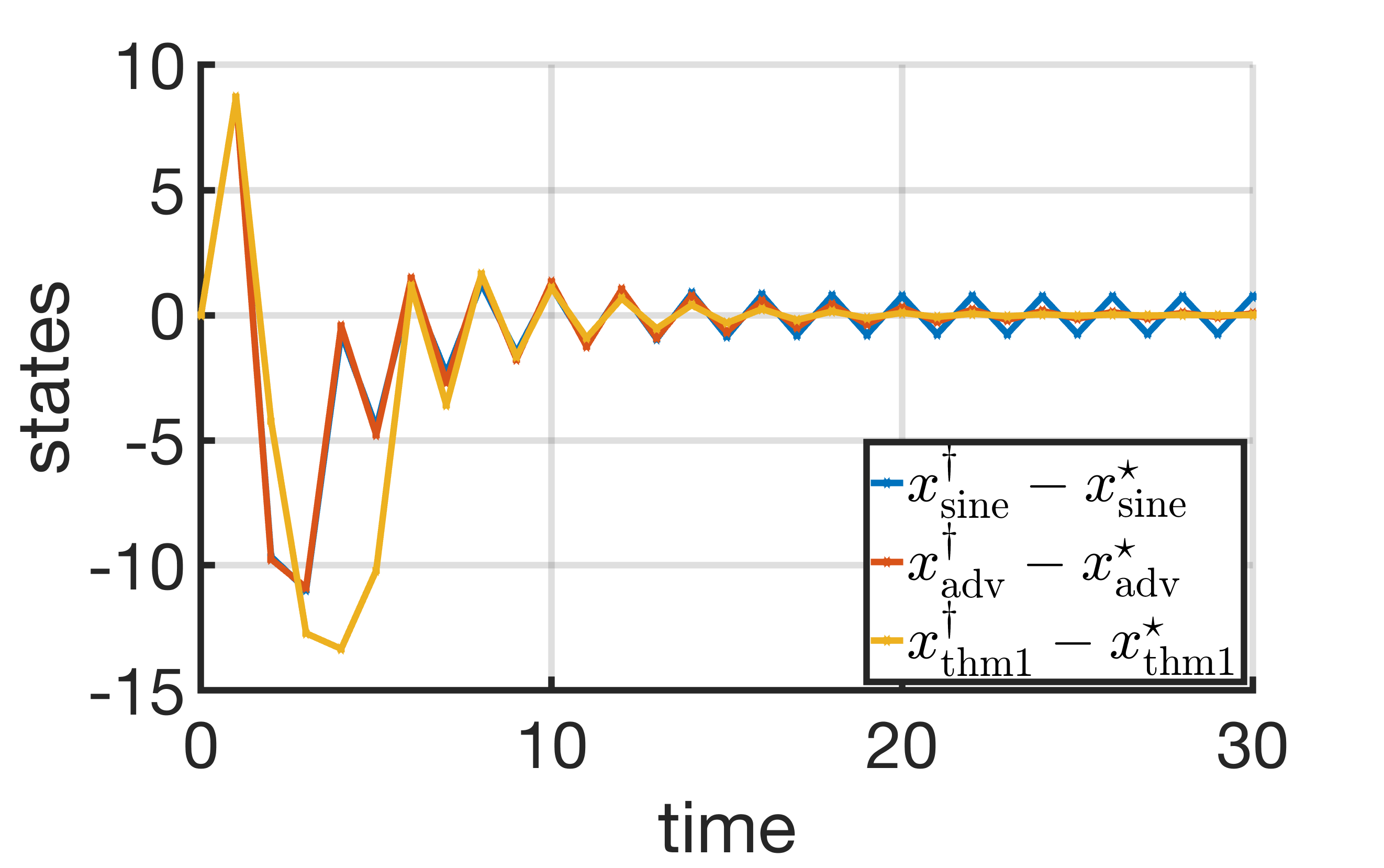}
    \caption{Minimax \& $\mathcal{H}_{\infty}$ States: $x^{\bar{\pi}^{\dagger}}_{k}$}
  \end{subfigure}%
  \begin{subfigure}{.33\linewidth}
    \centering
    \includegraphics[width=\textwidth]{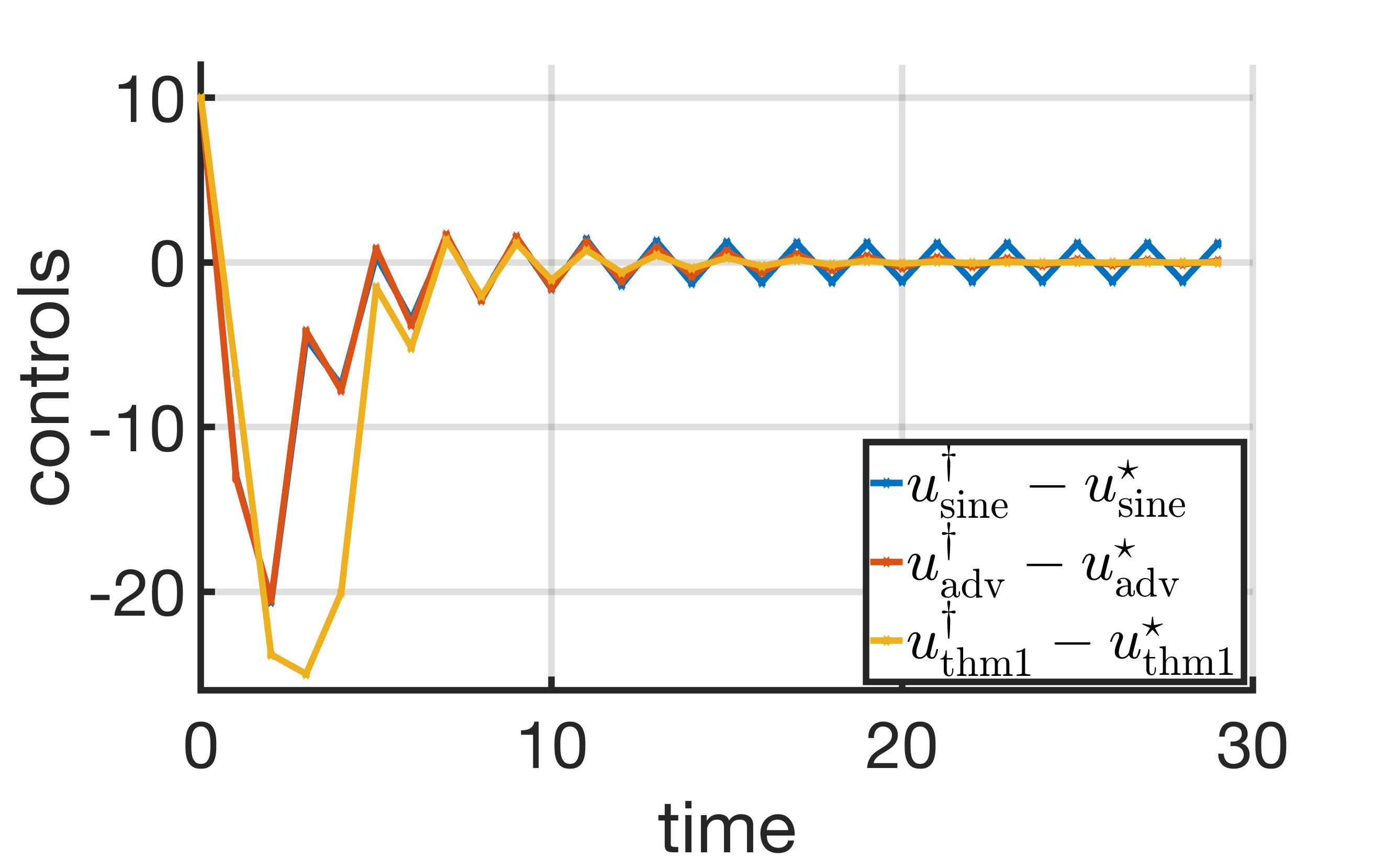}
    \caption{Minimax \& $\mathcal{H}_{\infty}$ Controls: $x^{\pi^{\star}_{2}}_{k}$}
  \end{subfigure}% 
  \begin{subfigure}{.33\linewidth}
    \centering
    \includegraphics[width=\textwidth]{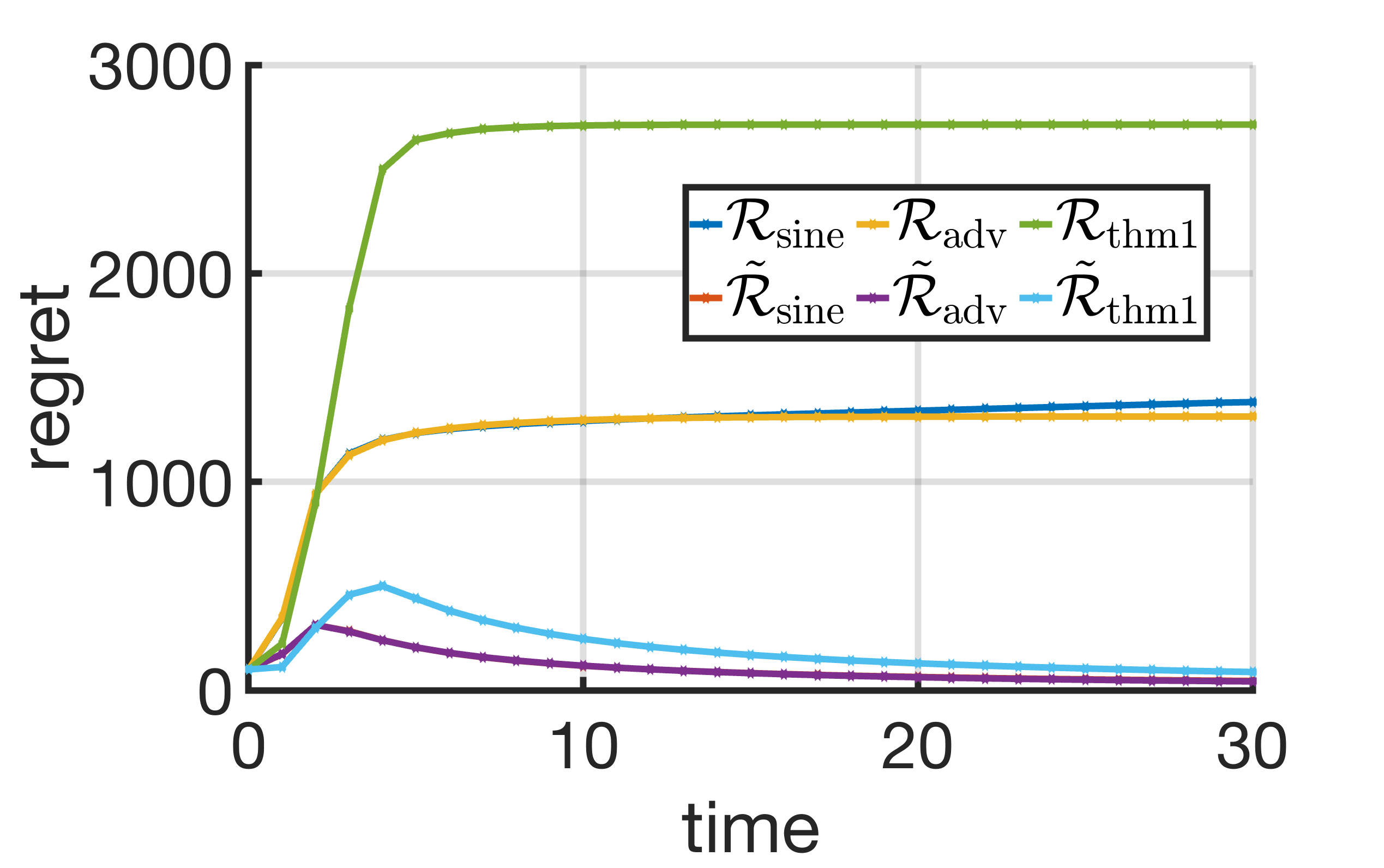}
    \caption{Regret scaling vs time}
  \end{subfigure}
 \caption{Simulation results with states and controls from the minimax adaptive controller and the $\mathcal{H}_{\infty}$ controller are plotted here along with the corresponding regret scaling over time. Only the first state of the system \eqref{eqn_random_dyn_sys} is plotted for the demonstration purpose. Note that the quantities $\mathcal{R}(\bar{\pi}^{\dagger}, \pi^{\star}_{2}, T)$ and $\frac{\mathcal{R}(\bar{\pi}^{\dagger}, \pi^{\star}_{2}, T)}{T}$ are abbreviated as $\mathcal{R}$ and $\tilde{\mathcal{R}}$ respectively. The text in the subscript of quantities in all sub-plots denotes the type of disturbance being used.}

  \label{fig_0}
\end{figure*}

\subsection{Problem Setup}
We consider the following numerical example of a linear dynamical system with four possible models. The state and control penalty matrices were $Q = I_{3}, R = 1$ and $T = 50$. We simulated the system using the minimax adaptive controller and the $\mathcal{H}_{\infty}$ controller available in hindsight separately when the pair $(A_{2}, B_{2})$ (corresponds to $j = 2$ as per Theorem \ref{thm_confuse_disturb}) was the true model.
\begin{equation} \label{eqn_random_dyn_sys}
\footnotesize
\begin{aligned}
A_{1} &= \begin{bmatrix}
1.908 &  0.853 & 0.633 \\
0.853 &  0.142 & 0.645 \\
0.633 &  0.645 & 0.018
\end{bmatrix}, 
A_{2} = \begin{bmatrix}
0.060 & 0.335 & 0.809 \\
0.335 & 0.017 & 1.507 \\
0.809 & 1.507 & 0.873
\end{bmatrix}, \\
A_{3} &= \begin{bmatrix}
0.182 &   1.435 &   0.730 \\
1.435 &   1.714 &   1.183 \\
0.730 &   1.183 &   0.452
\end{bmatrix}, 
A_{4} = \begin{bmatrix}
0.922  &  0.800  &  1.350 \\
0.800  &  1.431  &  1.462 \\
1.350  &  1.462  &  0.786
\end{bmatrix}, \\
B_{1} &= \begin{bmatrix}
1.830 & 1.285 & 0.002
\end{bmatrix}^{\top},
B_{2} = \begin{bmatrix}
0.873 & 0.098 & 0.099
\end{bmatrix}^{\top}, \\
B_{3} &= \begin{bmatrix}
1.073 & 1.524 & 0.695
\end{bmatrix}^{\top},
B_{4} = \begin{bmatrix}
0.358 & 1.266 & 1.248
\end{bmatrix}^{\top}.
\end{aligned}
\end{equation}
Three different disturbances constructions were used % namely, 
\begin{enumerate}
    \item worst case disturbance signal obtained from the dynamic game based $\mathcal{H}_{\infty}$ approach given by \eqref{eqn_hinfty_dist}. 
    \item sinusoidal disturbance with unit amplitude and its frequency being selected as the frequency where the $\mathcal{H}_{\infty}$ norm of $T_{d \rightarrow \zeta}[K](z)$ given by \eqref{eqn_tf_d_to_z} was maximum.
    \item the disturbance given by \eqref{eqn_minimax_confusing_disturbance} used in the proof of Theorem \ref{thm_confuse_disturb} with $i = 3$ and tuned so that the controller always choose the optimal controller for $(A_3, B_3)$. 
\end{enumerate}
The gains for the sub-optimal minimax adaptive controller were calculated using the method from \cite{cederberg2022synthesis} which is an improved version of Theorem 3 in \cite{rantzer2021minimax} and we used the Yalmip toolbox with the MOSEK solver to solve the associated convex optimization problem with linear matrix inequality constraints. %The pair $(A_2, B_2)$ was chosen to be the true model and the disturbance given by \eqref{eqn_minimax_confusing_disturbance} made the controller to always choose $(A_3, B_3)$. 
The code corresponding to the figures given in the paper is made publicly available at \url{https://github.com/venkatramanrenganathan/minimaxadaptivecontrolregret}. 

\subsection{Results \& Discussions}
The system dynamics with $\{(A_{i}, B_{i})\}^{4}_{i=1}$ given by \eqref{eqn_random_dyn_sys} was solved for the minimax adaptive control policy $\bar{\pi}$ to get $\bar{\gamma}^{\dagger} = 31.0086$. Then, the corresponding $\mathcal{H}_{\infty}$ controller was obtained using $\gamma = \bar{\gamma}^{\dagger}$. The optimal $\ell_{2}$ gains, namely $\{\gamma^{\star}_{i}\}^{4}_{i=1}$ corresponding to the optimal $\mathcal{H}_{\infty}$ controller for the plants $\{(A_{i}, B_{i})\}^{4}_{i=1}$ were $1.266, 4.544, 2.913, 2.298$  respectively. The model-based sub-optimality gaps were found using \eqref{eqn_model_based_subopt_gap} as ${\calo}(\bar{\pi}^{\dagger}, \pi^{\star}_{1}) = 29.7426$, ${\calo}(\bar{\pi}^{\dagger}, \pi^{\star}_{2}) = 26.4646$,
${\calo}(\bar{\pi}^{\dagger}, \pi^{\star}_{3}) = 28.0956$, and
${\calo}(\bar{\pi}^{\dagger}, \pi^{\star}_{4}) = 28.7106$. Further, the minimal and maximal sub-optimality gaps were $\underline{\calo}(\bar{\pi}^{\dagger}) = 26.4646$ and $\overline{\calo}(\bar{\pi}^{\dagger}) = 29.7426$ respectively. 

The results of simulating system with minimax adaptive controller and $\mathcal{H}_{\infty}$ controller with 
three different disturbance strategies are shown in Figure~\ref{fig_0}. The sub-figures ~\ref{fig_0}(a), and ~\ref{fig_0}(b) depict the difference of states and inputs respectively from the minimax adaptive controller and the $\mathcal{H}_{\infty}$ controller and precisely these are the main contributing factors of regret as per \eqref{eqn_total_regret_def_cost}. The regret quantities $\mathcal{R}(\bar{\pi}^{\dagger}, \pi^{\star}_{2}, T)$ and $\frac{\mathcal{R}(\bar{\pi}^{\dagger}, \pi^{\star}_{2}, T)}{T}$ are abbreviated as $\mathcal{R}$ and $\tilde{\mathcal{R}}$ respectively are plotted in the sub-figure ~\ref{fig_0}(c). When adversarial disturbance constructed from the worst case disturbance policy given by \eqref{eqn_hinfty_dist} was used, the associated regret was bounded as seen in sub-figure ~\ref{fig_0}(c). Moreover, the shown results fulfils \eqref{eqn_regret_growth_cdtn} as both the control policies were stabilising and the disturbance was regulated to zero as it was a linear function of the system states (as per \eqref{eqn_hinfty_dist}) which decayed in exponential time. When a sinusoidal disturbance with unit amplitude was employed with its frequency being selected as the frequency where the $\mathcal{H}_{\infty}$ norm of $T_{d \rightarrow \zeta}[K](z)$ given by \eqref{eqn_tf_d_to_z} was maximum (for $(A_{2}, B_{2})$ this happens at $\pi \, \mathrm{rad/s}$), the regret was not bounded anymore as the sinusoidal disturbance does not belong to the $\ell_{2}$ space. However, the ratio namely $\mathcal{R}(\bar{\pi}^{\dagger}, \pi^{\star}_{2}, T)/T$ went to zero asymptotically as the terms contributing to the regret namely the differences of states and controls remained small and did grow slower than linearly. When the disturbance given by \eqref{eqn_minimax_confusing_disturbance} was used, it ensured that the $l_{k}$ value chosen by the minimax adaptive control policy $\bar{\pi}^{\dagger}$ according to \eqref{eqn_minimax_model_select} was never equal to $j = 2, \forall k \in [0,T]$. 
Even though the ratio $\mathcal{R}(\bar{\pi}^{\dagger}, \pi^{\star}_{2}, T)/T$ went to zero asymptotically as disturbance was still a function of exponentially decaying states, it can be observed that the disturbance signal had a larger magnitude than the one from \eqref{eqn_hinfty_dist}. This shows that the disturbance that hardens the learning process need not necessarily worsen the performance as measured by \eqref{eqn_cost_function} because it results in a decrease of the total cost.

%%%%%%%%%%%%%%%%%%%%%%%%%%%%%%%%%%%%%%%%%%%%%%%%%%%%%%%%%%%%%%%%%
%%%%%%%%%%%%%%%%%%%%%%%%%%%%%%%%%%%%%%%%%%%%%%%%%%%%%%%%%%%%%%%%%

\section{Conclusion} \label{sec_conclusion}
An online learning-inspired analysis for a recently proposed solution for a class of minimax adaptive control problems has been presented. Model-based regret and total regret for the minimax adaptive control policy were defined by comparing the state and input trajectories against that of the optimal $\mathcal{H}_{\infty}$ controller in hindsight (i.e. having knowledge of the true system dynamics). One of the highlights of the analysis is that the total regret is sub-linear for exogenous disturbances in the $\ell_{2}$ space, confirming links between system theoretic properties and regret for control systems. Future research will seek to  
characterize transient properties of regret, and their connections with the exploration-exploitation trade-off inherently captured by the
%understand the time required by the 
minimax adaptive control algorithms. Starting from the definitions of regret proposed here, designing regret-optimal adaptive controllers that lower the conservatism of minimax solutions is also an important research question lying ahead. 
%to estimate the unknown dynamics with high accuracy in terms of the number of models in the considered set and also seek to understand the time required by the minimax adaptive controller to match the optimal $\mathcal{H}_{\infty}$ controller aiming to quantify the growth of regret.

%  Future research will seek to
%better understand the time required by the minimax adaptive
%control algorithms to estimate the unknown dynamics with
%high accuracy in terms of the number of models in the
%considered set and also seek to understand the time required
%by the minimax adaptive controller to match the optimal H∞
%controller aiming to quantify the growth of regret

%%%%%%%%%%%%%%%%%%%%%%%%%%%%%%%%%%%%%%%%%%%%%%%%%%%%%%%%%%%%%%%%%
%%%%%%%%%%%%%%%%%%%%%%%%%%%%%%%%%%%%%%%%%%%%%%%%%%%%%%%%%%%%%%%%%

\section*{Acknowledgment}
\inArxiv{
This project has received funding from the European Research Council (ERC) under the European Union’s Horizon 2020 research and innovation program under grant agreement No 834142 (Scalable Control). V. Renganathan and A. Rantzer are members of the ELLIIT Strategic Research Area in Lund University.
}
The authors are thankful to Daniel Cederberg at Linköping University for providing us with the code and to Olle Kjellqvist at Lund University for his insightful comments.

%%%%%%%%%%%%%%%%%%%%%%%%%%%%%%%%%%%%%%%%%%%%%%%%%%%%%%%%%%%%%%%%%
%%%%%%%%%%%%%%%%%%%%%%%%%%%%%%%%%%%%%%%%%%%%%%%%%%%%%%%%%%%%%%%%%

\inConf{
\bibliographystyle{IEEEtran}
\bibliography{references}
}
\inArxiv{
\bibliographystyle{tmlr}
\bibliography{references}
}

%%%%%%%%%%%%%%%%%%%%%%%%%%%%%%%%%%%%%%%%%%%%%%%%%%%%%%%%%%%%%%%%%%%%%%%%%%%%%

\end{document}